\numberwithin{equation}{section}
\theoremstyle{plain}
\newtheorem{theorem}[subsection]{Theorem}
\theoremstyle{definition}
\newtheorem{definition}[theorem]{Definition}
\newtheorem{example}[theorem]{Example}
\newcommand{\vast}{\bBigg@{4}}
\newcommand{\Vast}{\bBigg@{5}}
\newcommand*\bigcdot{\mathpalette\bigcdot@{.5}}
\newcommand*\bigcdot@[2]{\mathbin{\vcenter{\hbox{\scalebox{#2}{$\m@th#1\bullet$}}}}}
\@date \else {\vskip3ex \centering\@date\par\vskip1ex}\fi
\else \@footnotetext{\@setdate}\fi}
\renewcommand{\leq}{\leqslant}
\renewcommand{\geq}{\geqslant}
\def\vs{\vspace{0.2cm}}
\def\ni{\noindent}
\def\emph#1{{\it #1}}
\def\textbf#1{{\bf #1}}
\begin{document}
	
\title{Conditional precedence orders for stochastic comparison of random variables}

\author{Sugata Ghosh and Asok K. Nanda}
\address{Department of Mathematics and Statistics\\
	Indian Institute of Science Education and Research Kolkata\\
	Mohanpur 741246, India}
\email{sg18rs017@iiserkol.ac.in}
\email{asok@iiserkol.ac.in}

\subjclass[2010]{60E15}
\keywords{Stochastic order, Connex property, Statistical dependence}

\allowdisplaybreaks
\raggedbottom

\begin{abstract}
	Most of the stochastic orders for comparing random variables, considered in the literature, are afflicted with two main drawbacks: $(i)$ lack of {\it connex} property and $(ii)$ lack of consideration of any dependence structure between the random variables. Both these drawbacks can be overcome at the cost of transitivity with the stochastic precedence order, which may seem to be a good choice in particular when only two random variables are under consideration, a situation where the question of transitivity does not arise. In this paper, we show that even under such favorable conditions, stochastic precedence order may direct to misleading conclusion in certain situations and develop variations of the order to address the phenomenon.
\end{abstract}

\maketitle
\markleft{\uppercase{Conditional precedence orders}}
\markright{\uppercase{Sugata Ghosh and Asok K. Nanda}}

\section{Introduction}\label{c-section-introduction}
	Randomness is an unavoidable phenomenon that occurs in any scientific study. It can arise either naturally, or out of our limitations in taking precise measurements. In most scientific fields, we face the problem of comparing various quantitative features of different schemes, species or categories. These measurements are inevitably contaminated by random errors that we cannot control, but can model them using appropriate probabilistic framework. Stochastic orders are useful tools to compare random variables in a systematic way. The concept finds applications in several areas of studies including statistics, probability, actuarial science, operations research, risk management and wireless communications. Let $X$ and $Y$ be two continuous random variables with respective cumulative distribution functions (cdf) $F_X$ and $F_Y$, probability density functions (pdf) $f_X$ and $f_Y$, and hazard rate functions $r_X$ and $r_Y$. Below we give definitions of four widely applied stochastic orders.
	
	\begin{definition}\label{c-definition-stochastic-order}
		$X$ is said to be less than $Y$ in
		\begin{enumerate}[label=\textnormal{(\roman*)}]
			\item {\it usual stochastic order} (denoted by $X \leq_{\textnormal{st}} Y$) if $F_X(x) \geq F_Y(x)$, for every $x \in \mathbb{R}$.
			\item {\it hazard rate order} (denoted by $X \leq_{\textnormal{hr}} Y$) if $r_X(x) \geq r_Y(x)$, for every $x \in \mathbb{R}$.
			\item {\it likelihood ratio order} (denoted by $X \leq_{\textnormal{lr}} Y$) if $f_Y(x)/f_X(x)$ is increasing in $x \in \mathbb{R}$.
			\item {\it mean residual life order} (denoted by $X \leq_{\textnormal{mrl}} Y$) if, for every $x \in \mathbb{R}$,
			\[
			\frac{\int_x^{\infty} \bar{F}_X(t) dt}{\bar{F}_X(x)} \geq \frac{\int_x^{\infty} \bar{F}_Y(t) dt}{\bar{F}_Y(x)},
			\]
			where $\bar{F}_X(t)=1-F_X(t)$ and $\bar{F}_Y(t)=1-F_Y(t)$, for every $t \in \mathbb{R}$. \hfill $\blacksquare$
		\end{enumerate}
	\end{definition}
	
	For more details on various stochastic orders, we refer to \citet{MS_2002} and \citet{SS_2007}. One can immediately check that the stochastic orders, defined above, are all partial orders. Also, it can be verified that likelihood ratio order implies hazard rate order, which in turn implies both usual stochastic order and mean residual life order. However, these stochastic orders lack two crucial aspects which one may expect from a valid ordering procedure for random variables.\vs
	
	\begin{enumerate}
		\item The {\it connex} property, i.e. given two random variables $X$ and $Y$ it may happen that none of $X \leq_{\textnormal{A}} Y$ and $Y \leq_{\textnormal{A}} X$ hold, where $A$ is any partial stochastic order.\vs
		
		\item The orders do not consider any possible dependence structure between the random variables involved. The reason behind this limitation is that these orders are defined based on respective marginal distributions of the random variables. Hence, any dependence structure, encoded in the joint distribution gets lost.\vs
	\end{enumerate}

	The first problem can be solved by upgrading to total orders, which unfortunately comes with the cost of throwing away most of the information available in the marginal distributions. An example of such order in the {\it mean order}, defined as follows.
	\begin{definition}\label{c-definition-mean-order}
		Let $X$ and $Y$ be two random variables with finite mean, i.e. $E\vert X \vert<\infty$ and $E\vert Y \vert<\infty$. Then, $X$ is said to be less than $Y$ in {\it mean order} (denoted by $X \leq_{\textnormal{mean}} Y$) if $E(X) \leq E(Y)$. \hfill $\blacksquare$
	\end{definition}

	This order still does not capture any possible dependence between $X$ and $Y$. \citet{AKS_2002} introduced the following stochastic order, that satisfies the connex property and captures dependence between the random variables in a certain way, at the cost of giving up the transitivity property.
	\begin{definition}\label{c-definition-stochastic-precedence-order}
		$X$ is said to be less than $Y$ in {\it stochastic precedence order} (denoted by $X \leq_{\textnormal{sp}} Y$) if $P(X \leq Y) \geq 1/2$. \hfill $\blacksquare$
	\end{definition}
	Note that $X$ is said to be equal to $Y$ in stochastic order $A$ (denoted by $X =_{\textnormal{A}} Y$) if $X \leq_{\textnormal{A}} Y$ and $Y \leq_{\textnormal{A}} X$, where $A$ can be any valid stochastic order, such as the ones defined above. The analysis of the quantity $P(X \leq Y)$, which quantifies the dominance of $Y$ over $X$, was first considered in \citet{B_1956}. It has extensive applications in stress-strength analysis (see \citet{K_2003} for more details). When $X$ and $Y$ are independent, usual stochastic order implies stochastic precedence order. The lack of transitivity becomes crucial when more than two random variables are involved in the problem of stochastic comparison. However, this issue does not arise in the problem of comparing only two random variables. So, it may seem that, in a problem of stochastic comparison of two random variables, stochastic precedence order is the best option available, in the sense that it enjoys the desirable connex property and uses the joint distribution of $X$ and $Y$ (instead of only the respective marginal distributions), thereby takes into account any underlying dependence structure. However, the next example, although somewhat contrived, refutes this intuition.
	\begin{example}\label{c-example-conditional-1}
		Consider a simple gambling scenario where a gambler has to choose between two schemes. If he chooses scheme $A$, then a biased coin with probability of head $0.6$ is tossed. If head comes up, then the gambler wins $1000$ points, whereas if tail comes up, then he returns empty-handed. If he chooses scheme $B$, then he wins $999$ points irrespective of the outcome of the toss of the above mentioned coined.\vs
		\begin{center}
			\begin{tabular}{ |c|c|c| } 
				\hline
				$\downarrow$ Outcome / Scheme $\rightarrow$ & A & B \\
				\hline 
				Head (probability 0.6) & 1000 & 999 \\
				\hline
				Tail (probability 0.4) & 0 & 999 \\ 
				\hline
			\end{tabular}
		\end{center}\vs
		
		Let $X$ and $Y$ respectively denote the random variable that denote points earned by the gambler if he opts for scheme $A$ and scheme $B$. We intend to stochastically compare $X$ and $Y$. Before we employ any stochastic order, we note that it is always sensible to go for scheme $B$ with guaranteed profit, instead of opting scheme $A$ where one faces the risk of winning nothing for only a single more point to earn. A valid stochastic order should reflect this. Now, it is easy to check that the respective cdfs of $X$ and $Y$ cross each other and hence the partial orders are of no help in this situation. Turning to stochastic precedence order, we see that $P(Y \leq X)=0.6$ and hence $Y \leq_{\textnormal{sp}} X$, prompting one to choose scheme $A$ that contradicts common sense. Turning to mean order, we see that $E(X)=600$ and $E(Y)=999$. Thus $X \leq_{\textnormal{mean}} Y$, which agrees on the sensible conclusion. The example clearly shows that a gambler, armed with stochastic precedence order only, may run into trouble in the setup described above. \hfill $\blacksquare$
	\end{example}\vs

	There are two principal ways of comparing two random variables $X$ and $Y$. One way is to compare them through the original probability space $(\Omega,\mathscr{F},P)$ that they are defined on. The other way is through their distributions based on the derived probability spaces $(\mathbb{R},\mathscr{B},P_X)$ and $(\mathbb{R},\mathscr{B},P_Y)$, induced by the random variables. Observe that there are two main aspects of the comparative behavior of two random variables when we compare them based on the original probability space. One is the region of the sample space where $X \leq Y$, i.e. $\{\omega \in \Omega : X(\omega) \leq Y(\omega)\}$. The probability measure of this region can be thought of as a measure of dominance of $Y$ over $X$ and forms the basis of stochastic precedence order. The other aspect is the difference between $X(\omega)$ and $Y(\omega)$ at a sample point $\omega \in \Omega$. The mean order is entirely dependent on the sign of the difference $E(X)-E(Y)$, which can be written as
	\[
	\int_{\Omega} \left\{X(\omega)-Y(\omega)\right\}\,dP(\omega).
	\]
	Clearly, this quantity is a probabilistic accumulation of the difference between $X$ and $Y$ over the entire sample space, not just any specific sub-region inside it. Thus, the stochastic precedence order and the mean order stand as the purest representations of the two aspects described above in the sense that they consider only one aspect and completely ignores the other. In the next two sections, we introduce two stochastic orders that use both these aspects to come to a conclusion on preferring one random variable over the other. We shall show that there are situations when these new orders can provide better choices than traditional stochastic orders that employ a far limited information about the random variables.\vs

\section{Conditional $\mathcal{L}_1$ precedence order}\label{c-section-conditional-1}
	The problem with stochastic precedence order is that it involves the joint distribution only through the quantity $P(X \leq Y)$, which is only a tiny bit of information derived from the joint distribution. In fact, for a given sample point $\omega \in \Omega$, \cref{c-definition-stochastic-precedence-order} only considers whether $X(\omega) \leq Y(\omega)$ or not, and remains completely ignorant of the quantity $\vert X(\omega)-Y(\omega) \vert$. In this section, we develop a stochastic order that takes this difference into consideration.\vs
	
	The general approach is to consider an appropriate statistical distance between two random variables $X$ and $Y$ and partition the measure into two parts, one corresponding to the event $X<Y$ and the other corresponding to the event $X>Y$ (the event $X=Y$ does not contribute to any valid distance between $X$ and $Y$). The stochastic orders considered in this work are constructed on the basis of comparison of these two terms, in terms of their contribution to the sum. If the first term is bigger (resp. smaller) than the second, we take it as an indication that $X$ is smaller (resp. larger) than $Y$. If the two terms are equal, then we consider the two random variables equal in order. First we consider the following distance.
	\[
	\mathcal{L}_1\left(X,Y\right)=E\left(\left\vert X-Y \right\vert\right).
	\]
	Let us assume that $\mathcal{L}_1\left(X,Y\right)<\infty$, which is guaranteed if both the random variables have finite first order moments. Observe that
	\[
	\mathcal{L}_1\left(X,Y\right)=E\left(Y-X | X<Y\right)P\left(X<Y\right)+E\left(X-Y | X>Y\right)P\left(X>Y\right).
	\]
	This partition forms the basis of the following stochastic order.
	\begin{definition}\label{c-definition-conditional-mean-precedence-order}
		$X$ is said to be less than $Y$ in {\it conditional $\mathcal{L}_1$ precedence order} (denoted by $X \leq_{\textnormal{cp-}\mathcal{L}_1} Y$) if
		\begin{equation}\label{c-eq-conditional-mean-precedence-order}
		E\left(Y-X | X<Y\right)P\left(X<Y\right) \geq E\left(X-Y | X>Y\right)P\left(X>Y\right).
		\end{equation}
		$X$ and $Y$ are said to be equal in conditional $\mathcal{L}_1$ precedence order (denoted by $X =_{\textnormal{cp-}\mathcal{L}_1} Y$) if $X \leq_{\textnormal{cp-}\mathcal{L}_1} Y$ and $Y \leq_{\textnormal{cp-}\mathcal{L}_1} X$. Note that if $\mathcal{L}_1\left(X,Y\right)>0$, then \eqref{c-definition-conditional-mean-precedence-order} can be equivalently written as
		\begin{equation*}\label{c-eq-conditional-mean-precedence-order-1-2}
			\left\{E\left(\left\vert X-Y \right\vert\right)\right\}^{-1} E\left(Y-X | X<Y\right)P\left(X<Y\right) \geq 1/2. \tag*{$\blacksquare$}
		\end{equation*}
	\end{definition}
	
	In \cref{c-example-conditional-1}, we have $E\left(Y-X | X<Y\right)P\left(X<Y\right)=999 \times 0.4=399.6$ and $E\left(X-Y | X>Y\right)P\left(X>Y\right)=1 \times 0.6=0.6$. Thus $X \leq_{\textnormal{cp-}\mathcal{L}_1} Y$, which agrees on the sensible conclusion and disagrees on the conclusion drawn from stochastic precedence order. What it succeeds in incorporating (and stochastic precedence order fails to capture) is that even though the event $X>Y$ occurs with a high probability, the difference between $X$ and $Y$ is extremely small, making its contribution to $\mathcal{L}_1\left(X,Y\right)$ almost negligible. On the other hand, despite the fact that the event $X<Y$ occurs with low probability, the difference between $X$ and $Y$ is so huge in this event that it constitutes almost the entirety ($\approx 99.8\%$) of $\mathcal{L}_1\left(X,Y\right)$. Interestingly, the next result shows that the conditional $\mathcal{L}_1$ precedence order coincides with the mean order if the random variables under consideration are independent.
	\begin{theorem}\label{c-theorem-L-1-mean-equivalence-independence}
		Let $X$ and $Y$ be independent random variables. Then $X \leq_{\textnormal{cp-}\mathcal{L}_1} Y$ if and only if $E(X) \leq E(Y)$.
	\end{theorem}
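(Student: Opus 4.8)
The plan is to reduce the defining inequality of the order directly to a comparison of means by recognizing that each side is an ordinary (unconditional) integral over the relevant event. Writing $\mathbf{1}_E$ for the indicator of an event $E$, set
\[
A = E\!\left[(Y-X)\mathbf{1}_{X<Y}\right], \qquad B = E\!\left[(X-Y)\mathbf{1}_{X>Y}\right].
\]
Because the product of a conditional expectation with the probability of the conditioning event equals the integral of the integrand over that event, we have $A = E(Y-X\mid X<Y)P(X<Y)$ and $B = E(X-Y\mid X>Y)P(X>Y)$, so that $X \leq_{\textnormal{cp-}\mathcal{L}_1} Y$ is by definition the statement $A \geq B$. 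These rewrites remain valid even if $P(X<Y)$ or $P(X>Y)$ vanishes, the corresponding term then being $0$. The finiteness hypothesis $\mathcal{L}_1(X,Y)<\infty$, guaranteed by $E|X|,E|Y|<\infty$, ensures all of these integrals are finite and the manipulations below are legitimate.

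The engine of the argument is the decomposition of $E(Y-X)$ along the partition of the sample space into $\{X<Y\}$, $\{X=Y\}$ and $\{X>Y\}$. First I would note that independence together with the standing assumption that $X,Y$ are continuous forces $P(X=Y)=0$; hence
\[
E(Y-X) = E\!\left[(Y-X)\mathbf{1}_{X<Y}\right] + E\!\left[(Y-X)\mathbf{1}_{X>Y}\right] = A - B,
\]
the last equality using $-(X-Y)=Y-X$. Consequently $E(Y)-E(X) = A-B$, and the equivalence follows at once: $A \geq B$ holds if and only if $E(Y)-E(X)\geq 0$, that is, if and only if $E(X)\leq E(Y)$.

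There is no serious obstacle here; the only point that genuinely requires care is the justification that the event $\{X=Y\}$ contributes nothing to $E(Y-X)$. In the present setting this is handled by $P(X=Y)=0$, which is where the independence hypothesis (with continuity) is used. It is worth remarking, however, that the conclusion in fact persists without independence: on $\{X=Y\}$ the integrand $Y-X$ vanishes pointwise, so that event drops out of the decomposition regardless of any dependence structure, and the identity $E(Y)-E(X)=A-B$ holds in full generality. Thus the role of independence is purely to let the null event $\{X=Y\}$ be discarded cleanly, and the substance of the proof is the elementary integral identity above.
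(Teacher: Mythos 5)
Your proof is correct, but it takes a genuinely different---and in fact more general---route than the paper. The paper's proof invokes independence at the very first step: it writes each conditional term as a double integral against the product measure $dF_X(x)\,dF_Y(y)$ and uses Fubini-type manipulations to show that the left and right sides of the defining inequality equal $\int_{-\infty}^{\infty} x\,d\left(F_X(x)F_Y(x)\right)-E(X)$ and $\int_{-\infty}^{\infty} x\,d\left(F_X(x)F_Y(x)\right)-E(Y)$ respectively (note $F_XF_Y$ is the cdf of $\max(X,Y)$ under independence), after which the equivalence follows by cancelling the common term. Your argument bypasses all of this: the identity $A-B=E\left[(Y-X)\mathbf{1}_{\{X\neq Y\}}\right]=E(Y)-E(X)$ is purely pointwise---the integrand vanishes identically on $\{X=Y\}$---and uses only integrability, never independence; indeed your opening appeal to $P(X=Y)=0$ via continuity and independence is a detour that your own final paragraph correctly renders unnecessary. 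What your approach buys is a strictly stronger theorem: for \emph{any} integrable $X$ and $Y$, dependent or not, $X \leq_{\textnormal{cp-}\mathcal{L}_1} Y$ if and only if $E(X)\leq E(Y)$, so the conditional $\mathcal{L}_1$ precedence order coincides with the mean order on all integrable pairs. This is worth flagging because it quietly contradicts the paper's remark immediately after the theorem that the new order ``clearly improves upon the mean order when the random variables are not independent, by considering their dependence structure'': by your identity, the two orders can differ only when the means fail to exist while $E\vert X-Y\vert$ remains finite. What the paper's computation buys in exchange is merely the explicit closed form of each conditional term under independence, which has some independent interest but is not needed for the equivalence.
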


	\begin{proof}
		We prove the case when $X$ and $Y$ are continuous. If the random variables are discrete, then the proof is similar. Let us denote the respective cdfs of $X$ and $Y$ by $F_X$ and $F_Y$. By hypothesis, $X$ and $Y$ are independent. Then
		\begin{align*}
			&\,\,\,\,\,\,\,E\left(Y-X \vert X<Y\right)P\left(X<Y\right)\\
			&=\int\limits_{x<y} \left(y-x\right) dF_X(x) dF_Y(y)\\
			&=\int\limits_{-\infty}^{\infty} \int\limits_{-\infty}^{y} y dF_X(x) dF_Y(y)-\int\limits_{-\infty}^{\infty} \int\limits_{x}^{\infty} x dF_X(x) dF_Y(y)\\
			&=\int\limits_{-\infty}^{\infty} yF_X(y) dF_Y(y)-\int\limits_{-\infty}^{\infty} x \left(1-F_Y(x)\right) dF_X(x)\\
			&=\int\limits_{-\infty}^{\infty} x\left\{F_X(x) dF_Y(x)+F_Y(x) dF_X(x)\right\}-\int\limits_{-\infty}^{\infty} x dF_X(x)\\
			&=\int\limits_{-\infty}^{\infty} x d\left(F_X(x)F_Y(x)\right)-E(X).
		\end{align*}
	
		Similarly, we compute that
		\[
		E\left(X-Y \vert X>Y\right)P\left(X>Y\right)=\int\limits_{-\infty}^{\infty} x\,d\left(F_X(x)F_Y(x)\right)-E(Y).
		\]
		
		The proof now follows from the observation that $E\left(Y-X \vert X<Y\right)P\left(X<Y\right) \geq E\left(X-Y \vert X>Y\right)P\left(X>Y\right)$ if and only if $E(X) \leq E(Y)$.
	\end{proof}\vs

	Despite the equivalence under independence, the conditional $\mathcal{L}_1$ precedence order clearly improves upon the mean order when the random variables are not independent, by considering their dependence structure. The next result shows how the order behaves when the random variables go through identical location-scale transformations. The proof is straightforward and hence omitted.
	
	\begin{theorem}\label{c-theorem-L-1-location-scale}
		Let $X$ and $Y$ be two random variables and let $a \in \mathbb{R}$. Also assume that $X \leq_{\textnormal{cp-}\mathcal{L}_1} Y$. Then $a+bX \leq_{\textnormal{cp-}\mathcal{L}_1} a+bY$ if $b>0$ and $a+bY \leq_{\textnormal{cp-}\mathcal{L}_1} a+bX$ if $b<0$. \hfill $\blacksquare$
	\end{theorem}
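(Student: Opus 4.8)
The plan is to track directly how the two quantities defining the conditional $\mathcal{L}_1$ precedence order transform under the affine map $t \mapsto a+bt$, using the elementary fact that such a map rescales all differences by $b$ and either preserves or reverses order according to the sign of $b$. Throughout I would write $U=a+bX$ and $V=a+bY$ and record that $U-V=b(X-Y)$; in particular, whenever $b \neq 0$ the event $\{U=V\}$ coincides with $\{X=Y\}$, so it contributes nothing to the relevant conditional expectations.

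First I would dispose of the case $b>0$, where the sign of $U-V$ agrees with that of $X-Y$, so that $\{U<V\}=\{X<Y\}$ and $\{U>V\}=\{X>Y\}$ as events. On the former $V-U=b(Y-X)$ and on the latter $U-V=b(X-Y)$, so each of the two terms in the defining inequality for the pair $(U,V)$ equals $b$ times the corresponding term for $(X,Y)$; for instance
\[
E(V-U \mid U<V)\,P(U<V)=b\,E(Y-X \mid X<Y)\,P(X<Y),
\]
and similarly for the $\{U>V\}$ term. Since $b>0$, multiplying the hypothesis \eqref{c-eq-conditional-mean-precedence-order} through by $b$ preserves the inequality, which is precisely $a+bX \leq_{\textnormal{cp-}\mathcal{L}_1} a+bY$.

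Next I would treat $b<0$, where the map reverses order, so that $\{U<V\}=\{X>Y\}$ and $\{U>V\}=\{X<Y\}$, and the relation to be established is the reversed one $a+bY \leq_{\textnormal{cp-}\mathcal{L}_1} a+bX$, that is $V \leq_{\textnormal{cp-}\mathcal{L}_1} U$. Writing out the defining inequality for $V \leq_{\textnormal{cp-}\mathcal{L}_1} U$ and using $-b=\lvert b\rvert>0$, I would compute that its left-hand term $E(U-V \mid V<U)\,P(V<U)$ equals $\lvert b\rvert\,E(Y-X \mid X<Y)\,P(X<Y)$ and its right-hand term $E(V-U \mid V>U)\,P(V>U)$ equals $\lvert b\rvert\,E(X-Y \mid X>Y)\,P(X>Y)$. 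Cancelling the common positive factor $\lvert b\rvert$ then reduces the desired inequality to the hypothesis $X \leq_{\textnormal{cp-}\mathcal{L}_1} Y$ verbatim.

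The only step demanding care---and the reason the result is \emph{straightforward} rather than trivial---will be the bookkeeping in the case $b<0$, where both the conditioning events and the direction of the order flip simultaneously; one must correctly match $\{V<U\}$ with $\{X<Y\}$ and keep the signs of the rescaled differences straight. No analytic difficulty arises, however, since $b$ (respectively $\lvert b\rvert$) factors out as a positive constant from both defining terms, so the inequality is preserved at every stage.
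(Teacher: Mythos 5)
Your proof is correct: the differences are invariant under the translation by $a$, each of the two defining terms scales by the positive factor $\lvert b\rvert$, and when $b<0$ the conditioning events $\{X<Y\}$ and $\{X>Y\}$ swap roles, which is exactly why the order reverses. The paper omits its proof as ``straightforward,'' and your computation is precisely the intended argument, with the case bookkeeping done correctly.
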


	In particular, the conditional $\mathcal{L}_1$ precedence order is reversed under taking negation of the random variables. Note that this result cannot be strengthened to the more general case of monotone transformations, as demonstrated in the following example.
	\begin{example}
		Let us consider any nondecreasing function $\phi:\mathbb{R} \to \mathbb{R}$ satisfying $\phi(0)=0$, $\phi(999)=1$ and $\phi(1000)=1000$. In the setup of \cref{c-example-conditional-1}, let the new profits of the gamblers, betting in scheme $A$ and scheme $B$ respectively, be $\phi(X)$ and $\phi(Y)$. The gambling procedure is tabulated as below. 
		\begin{center}
			\begin{tabular}{ |c|c|c| } 
				\hline
				$\downarrow$ Outcome / Scheme $\rightarrow$ & A & B \\
				\hline 
				Head (probability 0.6) & 1000 & 1 \\
				\hline
				Tail (probability 0.4) & 0 & 1 \\ 
				\hline
			\end{tabular}
		\end{center}\vs
		Now it is easy to verify that $\phi(Y) \leq_{\textnormal{cp-}\mathcal{L}_1} \phi(X)$, even though $X \leq_{\textnormal{cp-}\mathcal{L}_1} Y$.
	\end{example}\vs
	
\section{Conditional $\mathbf{K^*}$ precedence order}\label{c-section-conditional-2}
	Observe that if $E\left(\left\vert X-Y \right\vert\right)$ is infinite, then it is possible that both sides of the inequality \eqref{c-eq-conditional-mean-precedence-order} are infinite. In such a scenario, it is difficult to choose one random variable over the other. The only choices for a practitioner, using conditional $\mathcal{L}_1$ precedence order, are to either declare $X$ and $Y$ to be equal in order or deem the conditional $\mathcal{L}_1$ precedence order inconclusive in this situation.\vs
	
	Also observe that conditional $\mathcal{L}_1$ precedence order imposes great importance on the absolute difference between the random variables, which makes it a better choice than stochastic precedence order when $X-Y$ may take extreme values at one side of $0$ (even with small probability), while taking relatively small values at the other side. On the other hand stochastic precedence order completely ignores this aspect except only the sign of $X-Y$. In practice, however, one may want to strike a balance between the two. The need of such a balancing order is demonstrated in the next example.
	\begin{example}\label{c-example-conditional-2}
		Consider the same gambling scenario as in \cref{c-example-conditional-1}, with an even more biased coin having probability of head $0.9$. Also the gambler who opts for scheme $A$, in case head turns up, is awarded much more points than that in \cref{c-example-conditional-1}. The new deal is summarized as follows.\vs
		\begin{center}
			\begin{tabular}{ |c|c|c| } 
				\hline
				$\downarrow$ Outcome / Scheme $\rightarrow$ & A & B \\
				\hline 
				Head (probability 0.9) & 1100 & 999 \\
				\hline
				Tail (probability 0.1) & 0 & 999 \\ 
				\hline
			\end{tabular}
		\end{center}\vs
		
		Let $X$ and $Y$ respectively denote the points earned by the gambler if he opts for scheme $A$ and scheme $B$. This time the conclusion out of common sense is not as obvious as in the previous example. One can still walk away with $999$ points by choosing the safe scheme $B$. But as both the probability of head and the profit in scheme $A$ if head occurs go significantly higher, one is tempted to go for the risky scheme $A$. If we take the way of stochastic precedence order, i.e. make a decision based on the probability $P(X \leq Y)$, then the favor for scheme $A$ is overwhelmingly high. It can be easily verified that mean order retains its stance on favoring scheme $B$, as does conditional $\mathcal{L}_1$ precedence order. However, one may choose to refrain from acting based on the probabilities only, at the same time choosing not to give the realized values of $X-Y$ too much importance. In such a scenario, difference between the random variables maybe scaled down before being considered in the decision making process.
	\end{example}\vs
	
	To construct a stochastic order that ensures a conclusion for any pair of random variables, as well as scales down the effect of the difference between their realized values, we consider the {\it Ky Fan} metric, given by
	\[
	\mathbf{K^*}\left(X,Y\right)=E\left(\frac{\left\vert X-Y \right\vert}{1+\left\vert X-Y \right\vert}\right).
	\]
	It is well-known that this metric metrizes convergence in probability on the space of real-valued random variables (see \citet[Theorem $3.5$]{D_1976}). Also, it is clear from the definition that this metric can take values only in $[0,1)$. Consider the following partition, which motivates the next definition.
	\[
	\mathbf{K^*}\left(X,Y\right)=E\left(\frac{Y-X}{1+Y-X} \bigg| X<Y\right)P\left(X<Y\right)+E\left(\frac{X-Y}{1+X-Y} \bigg| X>Y\right)P\left(X>Y\right).
	\]
	
	\begin{definition}\label{c-definition-conditional-k-precedence-order}
		$X$ is said to be less than $Y$ in {\it conditional $\mathbf{K^*}$ precedence order} (denoted by $X \leq_{\textnormal{cp-}\mathbf{K^*}} Y$) if
		\begin{equation}\label{c-eq-conditional-k-precedence-order}
			E\left(\frac{Y-X}{1+Y-X} \bigg| X<Y\right)P\left(X<Y\right) \geq E\left(\frac{X-Y}{1+X-Y} \bigg| X>Y\right)P\left(X>Y\right).
		\end{equation}
		$X$ and $Y$ are said to be equal in conditional $\mathbf{K^*}$ precedence order (denoted by $X =_{\textnormal{cp-}\mathbf{K^*}} Y$) if $X \leq_{\textnormal{cp-}\mathbf{K^*}} Y$ and $Y \leq_{\textnormal{cp-}\mathbf{K^*}} X$. Note that if $\mathbf{K^*}\left(X,Y\right)>0$, then \eqref{c-definition-conditional-k-precedence-order} can be equivalently written as
		\begin{equation*}\label{c-eq-conditional-k-precedence-order-1-2}
			\left\{E\left(\frac{\left\vert X-Y \right\vert}{1+\left\vert X-Y \right\vert}\right)\right\}^{-1} E\left(\frac{Y-X}{1+Y-X} \bigg| X<Y\right)P\left(X<Y\right) \geq 1/2.
		\end{equation*}
	\end{definition}
	
	\vspace*{0.2cm}
	\ni In \cref{c-example-conditional-1}, we have
	\[
	E\left(\frac{Y-X}{1+Y-X} \bigg| X<Y\right)P\left(X<Y\right)=0.3996
	\]
	and
	\[
	E\left(\frac{X-Y}{1+X-Y} \bigg| X>Y\right)P\left(X>Y\right)=0.3.
	\]
	
	\vspace*{0.2cm}
	\ni Hence $X \leq_{\textnormal{cp-}\mathbf{K^*}} Y$, which contradicts stochastic precedence order and agrees on the sensible conclusion. However, in \cref{c-example-conditional-2}, we compute
	\[
	E\left(\frac{Y-X}{1+Y-X} \bigg| X<Y\right)P\left(X<Y\right)=0.0999
	\]
	and
	\[
	E\left(\frac{X-Y}{1+X-Y} \bigg| X>Y\right)P\left(X>Y\right)=0.8912.
	\]

	\vspace*{0.2cm}
	\ni Thus in this case, conditional $\mathbf{K^*}$ precedence order contradicts both mean order and conditional $\mathcal{L}_1$ precedence order, giving an overwhelming preference on $X$ over $Y$, leading the gambler to a risky strategy with $90\%$ chance of great profit and $10\%$ chance of catastrophic loss.\vs
	
	\ni{\it Remark.} The reason behind the contradiction between conditional $\mathcal{L}_1$ precedence order and conditional $\mathbf{K^*}$ precedence order in certain situations is that $\mathbf{K^*}$ distance uses the scaled version $\left\vert x-y \right\vert/\left(1+\left\vert x-y \right\vert\right)$ instead of the usual distance $\left\vert x-y \right\vert$. Since $\mathbf{K^*}$ distance is bounded in $[0,1)$, it undermines the effect of huge differences, which may turn crucial in certain situations. For instance, in \cref{c-example-conditional-1}, the difference between earned points in the two schemes is huge ($999$ points) if tail occurs. However, the scaling of $\mathbf{K^*}$ distance reduces it down to $0.999$, whereas the difference of $1$ point in case head turns up, is reduced only to $0.5$.\vs
	
\section{Conclusion}\label{c-section-conclusion}
	With several stochastic orders available in the literature, the practitioners face the problem of choosing one among the lot, which can be tricky as they often contradict among themselves. The next example, a variant of which is available in \citet{B_1972}, demonstrates such a contradiction.
	\begin{example}\label{c-example-contradiction}
		Let $0<\epsilon<1$. Consider two random variables $X$ and $Y$ with joint probability density function given by
		\[
		f_{X,Y}(x,y)=
		\begin{cases*}
			\frac{1-\epsilon}{\epsilon\left(1-\frac{\epsilon}{2}\right)} & if $(x,y) \in (0,1) \times (0,1)$ and $0 \leq x-y \leq \epsilon$,\\
			\frac{2}{\epsilon} & if $(x,y) \in (0,1) \times (0,1)$ and $y-x>1-\epsilon$,\\
			0 & otherwise.
		\end{cases*}
		\]
		It is easy to verify that $P(X \leq Y)=\epsilon^2/2$ and $F_X(x) \geq F_Y(x)$, for every $x \in \mathbb{R}$. Hence, we have $Y \leq_{\textnormal{sp}} X$, but $X \leq_{\textnormal{st}} Y$. \hfill $\blacksquare$
	\end{example}
	
	The choice of the order depends on the particular situation at hand and this is where the judgment of the practitioner comes into play. In this paper, we have introduced two new stochastic orders that work better than the traditional stochastic orders in certain situations. The current section gives a rough prescription on how to choose among stochastic precedence order, mean order, conditional $\mathcal{L}_1$ precedence order and conditional $\mathbf{K^*}$ precedence order.\vs
	
	It is generally agreed that in the context of comparing two random variables, the conclusion drawn from a stochastic order must reflect common sense whenever it is obvious. A counter-intuitive stochastic order that contradicts common sense does not serve the purpose of valid comparison of random variables. Unlike precise mathematical criteria, common sense is highly subjective. It depends on the priorities of the practitioner, dictated by the situation that he is dealing with.\vs
	
	For instance, suppose the situation is only concerned with whether or not the realized value of one random variable is bigger than that of the other and the difference between the two realized values are of no consequence. This is usually the case in a win-loss scenario where a win brings same fortune irrespective of its margin. For example, in \cref{c-example-conditional-1}, if we impose the rule that whoever gathers most points will win the game and the other will lose, with the difference of points playing no role whatsoever, then one must go for stochastic precedence order. On the other extreme, if two players, one always betting on scheme $A$ and the other on scheme $B$, keep on playing the game for a large number of times and whoever gathers the most points in sum wins the game, then one must employ the mean order.\vs
	
	Now, let us consider the situation when the absolute difference of the random variables does play a role. Then one cannot rely on stochastic precedence order. Again suppose that the gamble is not repeated for a large number of times, i.e. it does not wait for the unlikely event of high impact to occur. Then using mean order also has no point. Now we have conditional $\mathcal{L}_1$ precedence order and conditional $\mathbf{K^*}$ precedence order as possible candidates to opt from. It is not possible to mathematically quantify the importance of the difference $\left\vert X(\omega)-Y(\omega) \right\vert$ in the process of preferring one random variable over the other and hence a practitioner has to use his judgment to choose the most appropriate stochastic order for the particular situation at hand. Roughly speaking, if the absolute difference between the random variables is very important (as in \cref{c-example-conditional-1}), then one should employ $\mathcal{L}_1$ precedence order. However, if it is somewhat important (as in \cref{c-example-conditional-2}, especially for an ambitious gambler with risk-taking mentality), then one should opt for conditional $\mathbf{K^*}$ precedence order.\vs
	
	The following table summarizes the preference between $X$ and $Y$ (whichever is bigger than the other) according to stochastic precedence order, mean order, conditional $\mathcal{L}_1$ precedence order and conditional $\mathbf{K^*}$ precedence order in the respective setups of \cref{c-example-conditional-1} and \cref{c-example-conditional-2}.\vs
	
	\begin{center}
		\begin{tabular}{ |c|c|c| } 
			\hline
			$\downarrow$ Stochastic order / Example $\rightarrow$ & \cref{c-example-conditional-1} & \cref{c-example-conditional-2} \\
			\hline 
			Stochastic precedence order & $X$ & $X$ \\
			\hline 
			Mean order & $Y$ & $Y$ \\
			\hline
			Conditional $\mathcal{L}_1$ precedence order & $Y$ & $Y$ \\
			\hline
			Conditional $\mathbf{K}^*$ precedence order & $Y$ & $X$ \\ 
			\hline
		\end{tabular}
	\end{center}\vs

	It should be noted that conditional $\mathbf{K}^*$ precedence order is just one specific way to control the effect of the difference between random variables to the process of ordering them. The practitioner may fine-tune the balance by choosing appropriate scale-down of the difference between random variables, according to the requirement of the situation at hand.
	
	\vs
	{\bf Acknowledgement.} The first author is supported by Senior Research Fellowship from University Grants Commission, Government of India.\newline
	
	\baselineskip=16pt
	
	\bibliographystyle{newapa}
	\bibliography{references}
\end{document}